\begin{document}
%
\title{Approximation Algorithms for Route Planning with Nonlinear Objectives}
\author{Ger Yang\\
Electrical and Computer Engineering\\
University of Texas at Austin\\
geryang@utexas.edu\\
\And
Evdokia Nikolova\\
Electrical and Computer Engineering\\
University of Texas at Austin\\
nikolova@austin.utexas.edu
}


\newtheorem{theorem}{Theorem}
\newtheorem{definition}{Definition}
\newtheorem{lemma}{Lemma}
\newtheorem{proposition}{Proposition}
\newtheorem{conjecture}{Conjecture}
\newtheorem{question}{Question}
\newtheorem{corollary}{Corollary}
\newtheorem{property}{Property}
\newtheorem{example}{Example}
\newtheorem{remark}{Remark}
\newtheorem{assumption}{Assumption}
\newtheorem{fact}{Fact}

\newcommand{\paths}{\mathcal{P}}
\newcommand{\R}{\mathbb{R}}
\newcommand{\N}{\mathbb{N}}
\newcommand{\Wmat}{\mathbf{W}}
\newcommand{\wvec}{\mathbf{w}}
\newcommand{\yvec}{\mathbf{y}}
\newcommand{\tfor}{\text{ for }}
\newcommand{\Wspace}{\mathcal{W}}
\newcommand{\Cregion}{\mathcal{C}}
\newcommand{\avec}{\mathbf{a}}
\newcommand{\Gset}{\mathcal{G}}
\newcommand{\lvec}{\mathbf{l}}
\newcommand{\hvec}{\mathbf{h}}

\newcommand{\upd}[1]{\textcolor[rgb]{1,0,0}{#1}}

\nocopyright

\maketitle
\begin{abstract}
We consider optimal route planning when the objective function is a general nonlinear and non-monotonic function.  Such an objective models user behavior more accurately, for example, when a user is risk-averse, or the utility function needs to capture a penalty for early arrival.  It is known that as nonlinearity arises, the problem becomes NP-hard and little is known about computing optimal solutions when in addition there is no monotonicity guarantee.  We show that an approximately optimal non-simple path can be efficiently computed  under some natural constraints.  
In particular, we provide a fully polynomial approximation scheme under hop constraints.  Our approximation algorithm can extend to run in pseudo-polynomial time under a more general linear constraint that sometimes is useful.  As a by-product, we show that our algorithm can be applied to the problem of finding a path that is most likely to be on time for a given deadline.
\end{abstract}

\section{Introduction}
In this paper, we present approximation algorithms for route planning with a nonlinear objective function.  Traditionally, route planning problems are modeled as linear shortest path problems and there are numerous algorithms, such as Bellman-Ford or Dijkstra's algorithm, that are able to find the optimal paths efficiently.  However, nonlinearity arises when we would like to make more complicated decision-making rules.  For example, in road networks, the objective might be a nonlinear function that represents the trade-off between traveling time and cost.  Unfortunately, dynamic programming techniques used in linear shortest path algorithms no longer apply as when nonlinearity appears, 
subpaths of optimal paths may no longer be optimal.

Applications include risk-averse routing or routing with a nonlinear objective when we are uncertain about the traffic conditions \cite{Nikolova:2006ab,Nikolova:2006aa}.  A risk-averse user tends to choose a route based on both the speediness and reliability.  Such risk-averse attitude can often be captured by a nonlinear objective, e.g. a mean-risk objective \cite{Nikolova:2010aa}.  On the other hand, the user might seek a path that maximizes the probability for him to be on time given a deadline. 
This objective was considered by \citeauthor{Nikolova:2010aa}~\shortcite{Nikolova:2010aa} to capture the 
risk-averse behavior.  For Gaussian travel times, the latter paper only considered the case where at least one path has mean travel time less than the deadline, and consequently the objective function can be modeled as a monotonic function.

The problem of finding a path that minimizes a monotonic increasing objective has been studied in the literature \cite{Goyal:2013aa,Nikolova:2010aa,Tsaggouris:2009aa}.  What if there is no monotonicity guarantee?  For example, consider the deadline problem when we know that all paths have mean travel time longer than the deadline, or consider the well-known cost-to-time ratio objective \cite{Megiddo:1979aa,Guerin:1999aa} in various combinatorial optimization problems.  Without the monotonicity assumption, the problem usually becomes very hard.  It has been shown \cite{Nikolova:2006ab} that for a general objective function, finding the optimal {\em simple} path is NP-hard.  This is because such a problem often involves finding the longest path for general graphs, which is known to be a strongly NP-hard problem \cite{Karger:1997aa}, namely finding a constant-factor approximation is also NP-hard.  This suggests that for a very general class of functions, including the cost-to-time ratio objective \cite{Ahuja:1983aa}, not only is the problem itself NP-hard but so is also its approximation counterpart.  

Therefore, in this paper, we focus on the problem that accepts \emph{non-simple} paths, in which it is allowed to visit nodes more than once. 
To make this problem well-defined, we consider the problem under either a hop constraint or an additional linear constraint, where we say a path is of $\gamma$ hops if it contains $\gamma$ edges.  We design a fully polynomial approximation scheme under hop constraints, and show that the algorithm can extend to run in pseudo-polynomial time under an additional linear constraint.  Further, we show how our algorithm can be applied to the cost-to-time ratio objective and the deadline problem.

\section{Related Work}
The route planning problem with nonlinear objective we consider here is related to multi-criteria optimization 
(e.g., \citeauthor{Papadimitriou:2000aa}~\citeyear{Papadimitriou:2000aa}).  The solutions to these problems typically look for approximate Pareto sets, namely a short list of feasible solutions that provide a good approximation to any non-dominated or optimal solution. 
A fully polynomial approximation scheme (FPTAS) for route planning with general monotonic and smoothed objective functions was presented by \citeauthor{Tsaggouris:2009aa}~\shortcite{Tsaggouris:2009aa}, based on  finding an approximate Pareto set for the feasible paths.  Later, \citeauthor{Mittal:2013aa}~\shortcite{Mittal:2013aa} provided a general approach for monotonic and smoothed combinatorial optimization problems and 
showed how to reduce these problems to multi-criteria optimization.

A special class of nonlinear optimization is \emph{quasi-concave minimization}.  This class of problems has the property that the optimal solution is an extreme point of the feasible set~\cite{Horst:2000aa}.  For quasi-concave combinatorial minimization problems, \citeauthor{Kelner:2007aa}~\shortcite{Kelner:2007aa} proved $(\log n)$-hardness of approximation and they also gave an additive approximation algorithm for low-rank quasi-concave functions based on smoothed analysis.  
Later, \citeauthor{Nikolova:2010aa}~\shortcite{Nikolova:2010aa} presented an FPTAS for risk-averse quasi-concave objective functions.  In recent work, \citeauthor{Goyal:2013aa}~\shortcite{Goyal:2013aa} gave an FPTAS for general monotone and smooth quasi-concave functions.

Without assuming monotonicity, the problem was studied by \citeauthor{Nikolova:2006ab}~\shortcite{Nikolova:2006ab}, who proved hardness results and gave a pseudo-polynomial approximation algorithm with integer edge weights for some specific objective functions.

\section{Problem Statement and Preliminaries}
\label{sec:problem_setting}
Consider a directed graph $G=(V,E)$.  Denote the number of vertices as $|V|=n$, and the number of edges as $|E|=m$.  Suppose we are given a source node $s$ and a destination node $t$.  In the \emph{nonlinear objective shortest path} problem, we seek an $s-t$ path $p$, possibly containing loops, that minimizes some nonlinear objective function $f(p)$.

More precisely, denote the feasible set of all $s-t$ paths by $\paths_t$, and let $\paths=\bigcup_{v \in V} \paths_v$ be the set of paths that start from $s$.  Here we have not yet imposed assumptions that $\paths$ contains only simple paths.  Let $f: \paths \rightarrow \R_+$ be a given objective function.  Then we can write the problem as finding the path $p^*$ that minimizes this objective function:
\begin{equation} \label{eq:prob_form}
p^* = \arg\min_{p \in \paths_t} f(p)
\end{equation}

We assume that $f$ is a low-rank nonlinear function with some smoothness condition.  Specifically, we say that $f$ is of rank $d$ if there exists a function $g: \R_+^d \rightarrow \R_+$ and an $m \times d$ weight matrix $\Wmat = (w_{e,k})_{m \times d}$, with positive entries, such that for any path $p \in \paths$, $f(p) = g(\sum_{e \in p} w_{e,1}, \sum_{e \in p} w_{e,2},  \dots, \sum_{e \in p} w_{e,d})$.

We interpret this as having $d$ criteria, such as for example length in miles, cost, travel time, etc. so that each edge is associated with $d$ additive 
weights, where the $k$-th weight for edge $e$ is denoted by $w_{e,k}$.  Then, the $k$-th weights of all edges can be represented by the column vector $\wvec_k = (w_{1,k}, w_{2,k}, \dots, w_{m,k})^T$.  For simplicity, we denote the sum of the $k$-th weights of the edges along path $p$ as $l_k(p) = \sum_{e \in p} w_{e,k}$, and we call it \emph{the $k$-th criterion of path $p$}.
Hence, the objective function $f$ can be rewritten as a function of all criteria: 
\begin{equation}\label{eq:deflowrank}
f(p) = g(l_1(p), l_2(p),  \dots, l_d(p) )
\end{equation}

Consider the linear space spanned by the $d$ criteria $\Wspace = span(\wvec_1, \wvec_2, \dots, \wvec_d)$.  Then, each feasible $s-t$ path $p \in \paths_t$ has an associated point in this $d$-dimensional space $\Wspace$ given by 
$\lvec(p)=(l_1(p), l_2(p), \dots, l_d(p))$, providing the values of the $d$ criteria for that path.  
We shall call the convex hull of 
such points the \emph{path polytope}.\footnote{Strictly speaking, it is the projection of the feasible path polytope in $\R^m$ onto this $d$-dimensional subspace but for simplicity we will just call it the path polytope in this paper.}  
We assume that the weights matrix $\Wmat$ contains only positive entries, and thus the path polytope is bounded from below: for any path $p \in \paths_t$ and criterion $k \in \{ 1,\dots,d\}$, we have $l_k(p) \ge \min_{p' \in \paths_t} l_k(p')$. This lower-bound corresponds to the shortest path with respect to the $k$-th criterion.  For upper bounds of the polytope, we defer the discussion to the later sections.  

\begin{figure}[t]
    \centering
    \includegraphics[width=\linewidth]{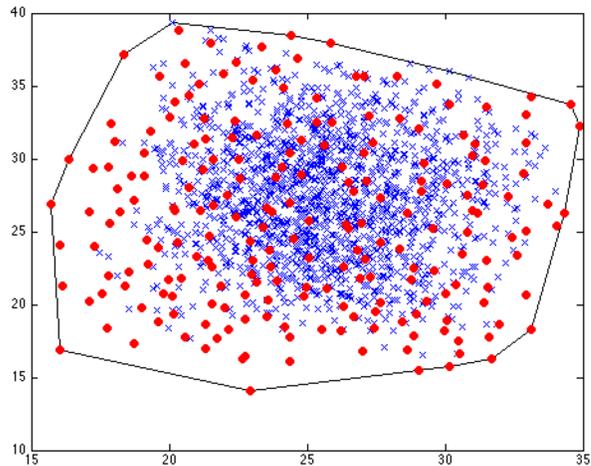}
    \caption{Illustration of a path polytope on a doubly weighted graph with a hop constraint.  Each blue cross and red dot correspond to an $s-t$ path projected onto $\Wspace$.  We can use polynomially many red dots to approximate the path polytope and consequently to approximate the optimal path.}
    \label{fig:approx_pathpolytope}
\end{figure}

The size of the path polytope 
is in worst case exponential in the problem size.  The main idea of this paper is to approximate the path polytope with polynomially many paths, as illustrated in Figure~\ref{fig:approx_pathpolytope}.  As we will show in this paper, the optimal path can be approximated if there are some smoothness conditions on the objective function.
The smoothness assumption we adapt here is as follows: $g$ is assumed to be $\beta$-Lipschitz on the log-log scale with respect to the $L1$-norm:  that is, for any $\yvec, \yvec'$ in the path polytope, 
\begin{equation} \label{eq:lipschitz}
| \log g(\yvec) - \log g(\yvec') | \le \beta \| \log \yvec - \log \yvec' \|_1
\end{equation}
where $\log \yvec$ is defined as a column vector: $\log \yvec = (\log y_1, \log y_2, \dots, \log y_d)^T$.

In this paper, we design approximation algorithms for this problem, formally defined as follows:
\begin{definition}
An $\alpha$-approximation algorithm for a minimization problem with optimal solution value $OPT$ and $\alpha > 1$ is a polynomial-time algorithm that returns a solution of value at most $\alpha OPT$.
\end{definition}
The best possible approximation algorithms provide solutions that are arbitrarily close to the optimal solution:
\begin{definition}
A fully-polynomial approximation scheme (FPTAS) is an algorithm for an optimization problem that, given desired accuracy $\epsilon>0$, finds in time polynomial in $\frac{1}{\epsilon}$ and the input size, a solution of value $OPT'$ that satisfies $|OPT-OPT'| \le \epsilon OPT$, for any input, where $OPT$ is the optimal solution value.
\end{definition}


\section{Hardness Results}
We can categorize the problem on general directed graphs into two types: one is the problem that accepts only simple paths, by which we mean each vertex can only be visited once in any feasible path; the other is the problem that accepts non-simple paths.  Unfortunately, both types of problems turn out to be hard.  In fact, the problem that only accepts simple paths in general involves finding the longest simple path, which is strongly NP-hard, namely it is NP-hard to approximate to within any constant factor \cite{Karger:1997aa}.  

First, we present a natural objective function that has been considered in the literature, that is hard to approximate.

\subsection{Minimum Cost-to-Time Ratio Path Problem}
The cost-to-time ratio objective function is a rank $2$ function, defined as follows:
\begin{equation*}
f(p)=g\bigg(\sum_{e\in p} w_{e,1}, \sum_{e\in p} w_{e,2}\bigg) = \frac{\sum_{e\in p} w_{e,1}}{\sum_{e\in p} w_{e,2}}
\end{equation*}
For simplicity, we assume $\Wmat$ has strictly positive entries.  This objective function has been considered previously for different combinatorial optimization problems \cite{Megiddo:1979aa,Correa:2010aa}.  However, finding the minimum cost-to-time ratio simple path is known to be an NP-hard problem \cite{Ahuja:1983aa}.  Here, we show its hardness of approximation:

\begin{theorem}
There is no polynomial-time $\alpha$-approximation algorithm for the minimum cost-to-time ratio simple path problem for any constant factor $\alpha$, unless $P=NP$.
\end{theorem}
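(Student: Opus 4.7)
The plan is to reduce from the longest simple $s$-$t$ path problem (LSP), which, by the Karger--Motwani--Ramkumar result cited earlier in the paper, admits no polynomial-time constant-factor approximation unless $P=NP$. I aim to show that any polynomial-time $\alpha$-approximation algorithm for minimum cost-to-time ratio (CTR) on simple paths would yield a constant-factor approximation for LSP, producing a contradiction.

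Given an LSP instance $(G, s, t)$ with $|V(G)| = n$, I would construct $G'$ by adjoining one fresh vertex $t'$ together with a single edge $(t, t')$. For every original edge $e \in E(G)$ set $w_{e,1} = w_{e,2} = 1$; for the new edge set $w_{(t,t'),1} = N$ and $w_{(t,t'),2} = 1$, where $N$ is a polynomially-large parameter to be fixed (for concreteness $N = n^2$). All weights are strictly positive, as required. Every simple $s$-$t'$ path in $G'$ must terminate with the edge $(t,t')$ and therefore decomposes uniquely as $p \cdot (t,t')$ where $p$ is a simple $s$-$t$ path with $k$ edges in $G$; its CTR value equals $(k+N)/(k+1)$, which is strictly decreasing in $k$ whenever $N > 1$. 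Consequently, the minimum-CTR simple $s$-$t'$ path in $G'$ is in one-to-one correspondence with the longest simple $s$-$t$ path in $G$.

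Suppose, for contradiction, that a polynomial-time $\alpha$-approximation algorithm $\mathcal{A}$ exists with $\alpha > 1$ constant. Running $\mathcal{A}$ on $G'$ returns a path with $k$ middle edges and ratio $R = (k+N)/(k+1) \le \alpha R^{*} = \alpha (L^{*} + N)/(L^{*} + 1)$, where $L^{*} \le n - 1$ is the length of the longest simple $s$-$t$ path in $G$. Cross-multiplying and collecting terms gives $(N - \alpha)(L^{*} + 1) \le k \bigl[(\alpha - 1)L^{*} + \alpha N - 1\bigr]$; with $N = n^2$ chosen so that $N$ dominates both $L^{*}$ and $\alpha$, the left side is $\Theta(N L^{*})$ while the bracket on the right is $\Theta(\alpha N)$, yielding $k = \Omega(L^{*}/\alpha)$. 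This is a constant-factor approximation for LSP, contradicting the cited inapproximability.

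The main obstacle is calibrating $N$: if $N$ is too small, the ratio $(k+N)/(k+1)$ is close to $1$ for all moderately long paths and the gap collapses; if $N$ is too large, the $\alpha$-approximation inequality becomes too slack to force $k$ to grow with $L^{*}$. The technical heart of the proof is verifying that a single polynomial choice such as $N = n^2$ simultaneously works for every constant $\alpha$, so that the constant-factor hardness of LSP cleanly transfers to min CTR.
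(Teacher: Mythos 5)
Your proposal is correct and follows essentially the same route as the paper: both reductions arrange for every simple path to have cost equal to its hop count plus one large constant and time equal to its hop count (the paper inflates the cost of the edges leaving $s$ rather than appending a fresh sink edge $(t,t')$), so that minimizing the ratio is equivalent to maximizing length, and then transfer the approximation guarantee by the same cross-multiplication argument. One small algebra slip: after cross-multiplying, the left-hand side should be $N(L^*+1)-\alpha(L^*+N)$ rather than $(N-\alpha)(L^*+1)$, but this is still $\Theta(NL^*)$ once $L^*$ exceeds a constant multiple of $\alpha$ (and instances with $L^*=O(1)$ are solvable exactly by brute force), so your conclusion stands.
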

\begin{proof}
To facilitate the presentation of our hardness of approximation proof, we first show that this problem is NP-hard.  We reduce a longest path problem instance to a minimum cost-to-time ratio simple path problem instance.  The longest path problem is known to be NP-hard to approximate to within any constant factor~\cite{Karger:1997aa}.  First, we choose some $\lambda>0$.  For an instance of the longest path problem, which is some graph $G$ with unweighted edges and a designated source node $s$ and sink node $t$, we create an instance of the minimum cost-to-time ratio path problem with the same graph $G$, source $s$ and sink $t$ and the following (cost,time) edge weights:
\begin{enumerate}
\item For each edge $e$ incident to $s$, set $w_{e}=(\lambda n+1,1)$.
\item For the remaining edges $e$, set $w_{e}=(1,1)$.
\end{enumerate}
We can see that a path is optimal under the ratio objective if and only if it is the longest path with respect to the unweighted edges (or, equivalently, with respect to the second weight).  This shows that the minimum cost-to-time ratio simple path problem is NP-hard.

To show the hardness of approximation, similarly, we use the fact that the longest path problem cannot be approximated to within any constant factor unless $P=NP$ \cite{Karger:1997aa}.  Suppose the contrary, namely that we can find a $(1+\epsilon)$-approximation algorithm for the minimum cost-to-time ratio simple path problem, for some $\epsilon \in (0, \lambda)$.  To be precise, since we can choose $\lambda$ while $\epsilon$ comes from the assumed approximation algorithm, we can set $\lambda > \epsilon$. 
For a given instance of longest path, consider the corresponding instance of minimum cost-to-time ratio path problem above. Let $p^*$ be the optimal min cost-to-time ratio path.  Then, the $(1+\epsilon)$-approximation algorithm will return a simple path $p$ such that
\begin{equation*}
\frac{\lambda n+l_2(p)}{l_2(p)} \le (1+\epsilon)\frac{\lambda n+l_2(p^*)}{l_2(p^*)}
\end{equation*}
Rearranging this, we get the inequality:
\begin{equation*}
(\lambda n-\epsilon l_2(p))l_2(p^*) \le \lambda(1+\epsilon)n l_2(p)
\end{equation*}
Since $l_2(p)\le n$ and $\epsilon<\lambda$, we get the following:
\begin{equation*}
l_2(p^*) \le \frac{\lambda(1+\epsilon)}{\lambda-\epsilon} l_2(p)
\end{equation*}
This shows that we can approximate the longest path problem to within a constant factor of $\frac{\lambda(1+\epsilon)}{\lambda-\epsilon}$, a contradiction. Therefore, the minimum cost-to-time ratio simple path problem cannot be approximated to within any constant factor unless $P=NP$.
\end{proof}

\subsection{General Objective Function}

For general objective functions, the following hardness results were proven by~\citeauthor{Nikolova:2006ab} \shortcite{Nikolova:2006ab}:

\begin{fact}
Suppose function $g$ attains a global minimum at $\yvec>0$.  Then problem \eqref{eq:prob_form} of finding an optimal $s-t$ simple path is NP-hard.
\end{fact}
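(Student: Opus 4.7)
The plan is to reduce from the \textsc{Partition} problem: given positive integers $a_1,\dots,a_n$ with $\sum_i a_i = 2S$, decide whether some $A \subseteq [n]$ satisfies $\sum_{i\in A} a_i = S$. Given such an instance and the (fixed) function $g$ with a global minimum at $\yvec > 0$, I would construct a ``chain of diamonds'' directed graph on nodes $s = v_0, v_1, \dots, v_n = t$. Between $v_{i-1}$ and $v_i$, place two parallel directed edges $e_i$ and $e_i'$ (replacing multi-edges by two-edge detours through auxiliary nodes if required), and assign weight vectors $\wvec_{e_i} = (a_i/S)\,\yvec$ and $\wvec_{e_i'} = \mathbf{0}$, adding a negligibly small positive perturbation to all entries if $\Wmat$ is required to have strictly positive entries. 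Any simple $s$-$t$ path then chooses exactly one edge per diamond, so its total weight vector is $\alpha\,\yvec$ with $\alpha = \tfrac{1}{S}\sum_{i \in A}a_i$, where $A$ is the set of diamonds in which the ``loaded'' edge was taken.

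By the global minimality of $g$ at $\yvec$, every simple $s$-$t$ path has objective value $g(\alpha \yvec) \ge g(\yvec)$, with equality exactly when $\alpha = 1$, i.e.\ when $\sum_{i \in A} a_i = S$, which is precisely a YES for \textsc{Partition}. A polynomial-time algorithm for \eqref{eq:prob_form} would therefore decide \textsc{Partition} by comparing the value it returns to the known constant $g(\yvec)$, yielding NP-hardness of the simple-path optimization problem.

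The main technical obstacle is ensuring that no simple path achieves value $g(\yvec)$ when \textsc{Partition} is a NO-instance — in particular, ruling out additional minimizers of $g$ on the ray $\{\alpha \yvec : \alpha \ge 0\}$. I would address this either by (i) a genericity/perturbation argument that makes $\yvec$ the unique minimizer of $g$ along the ray, or (ii) refining to a genuinely multidimensional diamond construction: for rank at least two, set the two edges of diamond $i$ to weight vectors $(a_i y_1/S, 0)$ and $(0, a_i y_2/S)$ (with the usual tiny positive perturbation). Then the achievable weights trace a one-dimensional discrete set whose only point of the form $\yvec$ corresponds to a \textsc{Partition} solution, which closes the reduction and completes the proof of the Fact.
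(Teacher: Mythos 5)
The paper itself offers no proof of this Fact: it is imported verbatim from the cited work of Nikolova et al.\ (2006), so there is no in-paper argument to compare against. Your reduction is, in substance, the standard one used in that source: a chain of $n$ two-edge ``diamonds'' whose achievable path-weight vectors are exactly the scaled subset sums $\alpha\,\yvec$, so that the global minimum value $g(\yvec)$ is attained by a simple $s$--$t$ path if and only if the \textsc{Partition} instance is a YES-instance. The approach is the right one and all paths in the gadget are automatically simple, so the simple-path restriction causes no trouble.

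Two caveats, both of which you partly anticipate. First, the obstacle you flag is real, and it is a defect of the statement rather than of your argument: read literally, the Fact is false (take $g$ constant, which attains a global minimum at every $\yvec>0$ while the optimization is trivial), so some uniqueness or strictness hypothesis on the minimizer is indispensable. Your fix (i) --- letting the reduction construct $g$ with $\yvec$ as its unique (and quantitatively strict) global minimizer, e.g.\ $g(\mathbf{x}) = 1 + \lVert \mathbf{x}-\yvec\rVert_2^2$ --- is the correct repair and is legitimate, since the reduction builds the entire instance including $g$. Your fix (ii), however, is weaker than you claim: placing weights $(a_i y_1/S,0)$ and $(0,a_i y_2/S)$ ensures that the \emph{point} $\yvec$ is achievable iff \textsc{Partition} has a solution, but it does not exclude $g$ attaining the same global minimum \emph{value} at some other achievable point $(\alpha y_1,(2-\alpha)y_2)$ with $\alpha\neq 1$; for an arbitrary $g$ with non-unique minimizers the reduction still does not close. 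Second, the positive-weight perturbation needs a quantitative companion: since the nearest achievable $\alpha\neq 1$ differs from $1$ by at least $1/S$, you must choose the perturbation small relative to that gap and use a known modulus of strictness of $g$ near $\yvec$ so that the YES/NO values remain separated by a computable margin. With $g$ chosen by the reduction as above, both points are routine and the proof is complete.
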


\begin{fact}
Consider problem \eqref{eq:prob_form} of finding an optimal $s-t$ simple path with $f$ being a rank-$1$ function.  Let $p^*$ be the optimal path, and $l^*=l(p^*)$.  If $g$ is any strictly decreasing and positive function with slope of absolute value at least $\lambda>0$ on $[0, l^*]$, then there does not exist a polynomial-time constant factor approximation algorithm for finding an optimal simple path, unless $P=NP$.
\end{fact}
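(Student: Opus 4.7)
The plan is to reduce from the longest simple path problem, which is NP-hard to approximate within any constant factor \cite{Karger:1997aa}. Since $g$ is strictly decreasing, $f(p)=g(l(p))$ is minimized exactly when the scalar length $l(p)$ is maximized, so the optimal $s$--$t$ path under $f$ coincides with a longest weighted simple $s$--$t$ path. The goal is to show that any polynomial-time $\alpha$-approximation algorithm for $f$ would induce a constant-factor approximation for longest simple path, contradicting the cited hardness.

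The core algebraic step translates $g$-approximation into $l$-approximation. For any simple path $p$ with $l(p)\le l^*$, the slope hypothesis on $g$ combined with $g$ being decreasing yields $g(l(p)) - g(l^*) \ge \lambda\,(l^* - l(p))$. Together with the supposed approximation guarantee $g(l(p)) \le \alpha\,g(l^*)$, this gives
\begin{equation*}
l^* - l(p) \;\le\; \frac{(\alpha - 1)\, g(l^*)}{\lambda}.
\end{equation*}
Since $g$ is positive and has slope magnitude at least $\lambda$ on $[0,l^*]$, we further have $g(l^*) \le g(0) - \lambda l^* = \lambda(R - l^*)$, where $R := g(0)/\lambda$. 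Thus the relative approximation error in $l$ is small exactly when $l^*$ lies close to $R$.

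To push every instance into this favorable regime, I would take a unit-weight longest-path instance on $n$ vertices and rescale every edge weight by the common factor $w = (R-\delta)/(n-1)$ for a small $\delta>0$ to be chosen. In the Hamiltonian case ($L^* = n-1$), the scaled optimum is $l^* = R-\delta$, so $g(l^*) \le \lambda\delta$ and the returned path satisfies $l(p) \ge R - \alpha\delta$, corresponding to nearly all $n-1$ original edges. Using the standard gap version of longest-path hardness (NP-hard to distinguish $L^* = n-1$ from $L^* \le (n-1)/c$ for any constant $c>1$), the non-Hamiltonian case forces $l^* \le (R-\delta)/c$ and hence $l(p) \le (R-\delta)/c$, i.e.\ at most $(n-1)/c$ original edges. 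Choosing $\delta$ small enough that $R-\alpha\delta > (R-\delta)/c$, which holds whenever $\delta < (c-1)R/(c\alpha - 1)$, separates the two output ranges, so a simple threshold on $l(p)$ would decide Hamiltonicity in polynomial time, the desired contradiction.

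The main obstacle is making the reduction genuinely polynomial-time given that the scaling parameters $w$ and $\delta$ depend on $R = g(0)/\lambda$, which is part of the given function $g$. One must verify that $w$ and $\delta$ can be encoded with polynomially many bits in the description of $g$ and $\lambda$; this can be handled by restricting to rational $g$ or by a small perturbation of $R$ that preserves the slope bound. Everything else reduces to the routine algebra already sketched above.
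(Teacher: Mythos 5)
The paper does not actually prove this Fact: it is stated with a citation to Nikolova et al.\ (2006) and no argument is given in the text, so there is no in-paper proof to compare against line by line. Your reconstruction is, however, essentially correct and matches the style of the one hardness proof the paper does spell out (Theorem~1 for the cost-to-time ratio): reduce from the gap version of longest simple path, and convert the multiplicative guarantee on the objective into a quantitative guarantee on the path length. The chain checks out: monotonicity makes the $f$-optimum the longest $s$--$t$ path; the slope bound gives $\lambda(l^*-l(p)) \le g(l(p))-g(l^*) \le (\alpha-1)g(l^*)$ and $g(l^*) \le \lambda(R-l^*)$ with $R=g(0)/\lambda$; rescaling unit weights to $w=(R-\delta)/(n-1)$ puts the Hamiltonian optimum at $R-\delta$ so that $g(l^*)\le\lambda\delta$ and the returned path has $l(p)\ge R-\alpha\delta$, while in the NO case every simple path trivially has $l(p)\le(R-\delta)/c$ (no approximation guarantee even needed there); and $\delta<(c-1)R/(c\alpha-1)$ separates the two ranges. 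Two points you should make explicit but which do not break the argument: first, the Fact's hypothesis is about the interval $[0,l^*]$ of the \emph{instance at hand}, so you must note that your constructed instances always have $l^*\le R-\delta$ and hence fall inside the region where $g$ is assumed steep and positive; second, the representation issue for $R$ that you flag is real but standard, and is the same kind of issue the paper itself glosses over when it picks $\lambda$ in the Theorem~1 reduction. Overall this is a sound, self-contained proof of a claim the paper only imports.
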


Although these facts are proven for rank-$1$ functions, they show that for a large class of objective functions, the problem of finding the optimal simple path is not only NP-hard, but also NP-hard to approximate within any constant factor.  

As a result, we turn our attention to the problem that accepts non-simple paths.  While this problem is also NP-hard~\cite{Nikolova:2006ab}, we are able to design a polynomial-time algorithm to approximate the optimal solution arbitrarily well.

\section{Approximation Algorithms}
Consider problem \eqref{eq:prob_form} that accepts non-simple paths.
For some cases, it is possible that there does not exist an optimal path of finite length.  Consider an example of a rank $1$ monotonic decreasing function $g(x)=1/x$.  Suppose there is a positive weight loop in any of the $s-t$ paths.  Then, we can always find a better solution by visiting such a loop arbitrarily many times: the optimal path here is ill-defined. 

Naturally, for the problem that accepts non-simple paths, the path polytope is infinite.  Even when the optimal path is finite, for a large class of problems, including the minimum cost-to-time ratio path problem, finding a finite subset of the path polytope that contains the optimal path is difficult, as shown in our Hardness section above.  In this section, we will show that, under some additional constraints, this problem becomes well-defined and well-approximable.  In what follows, we identify the additional constraints and present the algorithms to solve the corresponding problems.

\subsection{Hop Constraint} \label{subsec:hc}
Suppose the path contains at most $\gamma$ hops, then the set of feasible 
paths $\paths_t$ becomes finite and the problem is well-defined.  Denote $hop(p)$ to be the number of hops in  path $p$, or equivalently, the number of edges in $p$.  Note that an edge can be counted multiple times if path $p$ contains loops.

To approximate the optimal path, we approximate the path polytope with polynomially many paths, and then search for the best one among them.  To approximate the path polytope, we first identify a region of weights space $\Wspace$ that contains all paths starting from $s$, which are within $\gamma$ edges.  We can upper- and lower-bound this region as follows: For each criterion $k = 1, \dots, d$, define $c_k^{\min} = \min_{e \in E} w_{e,k}$ and $c_k^{\max} = \max_{e \in E} w_{e,k}$.
Then for any path $p$ that starts from the source $s$ and contains at most $\gamma$ edges, we have $l_k(p) \in [c_k^{\min}, \gamma c_k^{\max}]$, for each criterion $k \in \{1,\dots,d\}$.

Next, we partition the region $\prod_{k=1}^{d}[c_k^{\min}, \gamma c_k^{\max}]$ of the weights space $\Wspace$ into polynomially many hypercubes.  Specifically, for each criterion $k \in \{1,2,\dots, d\}$, let $\epsilon>0$ be a given desired approximation factor, and $C_k = c_k^{\max}/c_k^{\min}$.  Define $\Gamma_k = \{0, 1, \dots, \lfloor \log_{(1+\epsilon)} (\gamma C_k) \rfloor\}$ as the set of points that evenly partitions the segment $[c_k^{\min}, \gamma c_k^{\max}]$ on the log scale.  Then, we generate a $d$-dimensional lattice $\Gamma = \Gamma_1 \times \Gamma_2 \times \cdots \times \Gamma_d$ according to these sets.  After that, we define a hash function $\hvec=(h_1, h_2, \dots, h_d):\paths \rightarrow \Gamma$ that maps a path to a point on the lattice $\Gamma$.  Each $h_k: \paths \rightarrow \Gamma_k$ is defined as $h_k(p) = \lfloor \log_{(1+\epsilon)} \frac{l_k(p)}{c_k^{\min}} \rfloor$.

The main part of the algorithm iterates over $i \in \{1,2,\dots,\gamma\}$, sequentially constructing $s-v$ sub-paths containing at most $i$ edges, for each vertex $v \in V$.  For each iteration $i$ and each vertex $v \in V$, we maintain at most $|\Gamma|$ configurations as a $d$-dimensional table $\Pi_v^i$, indexed by $\yvec \in \Gamma$.  Each configuration $\Pi_v^i(\yvec)$ can be either an $s-v$ path $p$ of at most $i$ edges and $h(p)=\yvec$, given there is one, or $null$ otherwise.  The table can be implemented in various ways. \footnote{In practice, we can either allocate the memory of the entire table in the beginning, or use another data structure that only stores the used entries of the table.  According to our experimental results, only a small portion of the table $\Pi_v^i$ will be used.}  Here we assume that given index $\yvec \in \Gamma$, both checking whether the configuration $\Pi_v^i(\yvec)$ is $null$, retrieving the path in $\Pi_v^i(\yvec)$, and inserting/updating a path in $\Pi_v^i(\yvec)$ can be done in $T_\Pi=T_\Pi(n,|\Gamma|)$ time.

At iteration $i$, for each vertex $u$, we will construct a set of $s-u$ sub-paths containing $i$ edges based on $(i-1)$-edge sub-paths we previously constructed.  
We loop over each vertex $v \in V$ and each edge $e=(v,u)$ incident to $v$, and we append the edge to the end of each $(i-1)$-edge $s-v$ path in $\Pi_v^{i-1}$.  However, not all generated sub-paths will be stored.  Consider any $s-u$ path $p$ generated with this procedure, it will be stored in table $\Pi_u^i$ only if $\Pi_u^i(\hvec(p))$ is empty.  With this technique, we can ensure that the number of sub-paths we record in each iteration is at most polynomial, because it is bounded by the size of the lattice $\Gamma$ times the number of vertices.  This procedure is outlined in Algorithm~\ref{alg:nosp_digraph}.

\begin{algorithm}
\caption{Nonlinear Objective Shortest Path with Hop-constraint}
\label{alg:nosp_digraph}
\begin{algorithmic}[1]
	\FOR{each $\yvec \in \Gamma$, $v \in V$, and $i=1,2,\dots,\gamma$}
		\STATE Set $\Pi_v^i(\yvec)=null$
	\ENDFOR
	\STATE Set $\Pi_s^0=\{s\}$.
	\FOR{ $i=1$ to $\gamma$} \label{algn:iter_loop}
	\STATE Set $\Pi_v^i = \Pi_v^{i-1}$ for each vertex $v \in V$.
	\FOR{ each vertex $v \in V$}
		\FOR{ each path $p \in \Pi_v^{i-1}$ with $i-1$ edges}
			\FOR{ each edge $e = (v,u) \in E$}
				\STATE Set $p'$ be the path that appends edge $e$ to the tail of $p$.
				\STATE Compute hash $\yvec = \hvec(p')$.
				\IF{ $\Pi_u^i(\yvec) = null$} \label{algn:feasible_check}
					\STATE $\Pi_u^i(\yvec) = p'$. \label{algn:assign}
				\ENDIF
			\ENDFOR
		\ENDFOR
	\ENDFOR
	\ENDFOR
	\RETURN $\max_{p \in \Pi_t^{\gamma}} f(p)$
\end{algorithmic}
\end{algorithm}


To prove the correctness of this algorithm, we show the following lemmas: 
\begin{lemma}\label{lemma:single_hop}
Let $p_u$ and $p_u'$ be two $s-u$ paths such that for each criterion $k \in \{1,2,\dots,d\}$, 
\begin{equation} \label{eq:l1eq1}
| \log l_k(p_u) - \log l_k(p_u') | \le \log(1+\epsilon).
\end{equation}
Suppose there is a vertex $v$ and an edge $e$ such that $e=(u,v) \in E$.  Let $p_v$ and $p_v'$ be the $s-v$ paths that append $(u,v)$ to the tails of $p_u$ and $p_u'$, respectively.  Then, for each criterion $k \in \{1,2,\dots,d\}$, 
$$
| \log l_k(p_v) - \log l_k(p_v') | \le \log(1+\epsilon)
$$
\end{lemma}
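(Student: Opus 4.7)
The claim is a statement about how ratios behave when the same positive quantity is added to numerator and denominator. The plan is to rewrite the log-difference hypothesis as a multiplicative ratio bound and then verify that this bound is preserved under adding the weight of the new edge.

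First I would note that the hypothesis \eqref{eq:l1eq1} is equivalent, after exponentiation, to
\[
\frac{\max(l_k(p_u), l_k(p_u'))}{\min(l_k(p_u), l_k(p_u'))} \le 1+\epsilon,
\]
and the conclusion is equivalent to the same ratio bound holding for $p_v, p_v'$. By symmetry, I may assume without loss of generality that $l_k(p_u) \le l_k(p_u')$, and write $w := w_{e,k} > 0$. Then $l_k(p_v) = l_k(p_u) + w$ and $l_k(p_v') = l_k(p_u') + w$, so in particular $l_k(p_v) \le l_k(p_v')$.

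Next I would invoke the elementary inequality: for any reals $0 < a \le b$ and any $c > 0$, one has $(b+c)/(a+c) \le b/a$. This is immediate from cross-multiplication: $(b+c)a = ab + ac \le ab + bc = b(a+c)$, using $a \le b$ and $c > 0$. Applying this with $a = l_k(p_u)$, $b = l_k(p_u')$, $c = w$, gives
\[
\frac{l_k(p_v')}{l_k(p_v)} = \frac{l_k(p_u') + w}{l_k(p_u) + w} \le \frac{l_k(p_u')}{l_k(p_u)} \le 1+\epsilon.
\]
Taking logarithms then yields $|\log l_k(p_v) - \log l_k(p_v')| \le \log(1+\epsilon)$, as required.

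There is no real obstacle here: the whole content is the observation that adding the same positive constant to numerator and denominator moves a ratio toward $1$, so the multiplicative deviation can only shrink, never grow. Since the argument is pointwise in $k$, applying it to each criterion separately finishes the proof.
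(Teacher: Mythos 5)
Your proof is correct and follows essentially the same route as the paper's: both arguments are the elementary observation that adding the same positive edge weight $w_{e,k}$ to two quantities whose ratio is at most $1+\epsilon$ cannot increase that ratio. The paper writes this as the symmetric pair of chains $l_k(p_u)+l_k(e) \le (1+\epsilon)l_k(p_u')+l_k(e) \le (1+\epsilon)(l_k(p_u')+l_k(e))$, while you package it as the cross-multiplication fact $(b+c)/(a+c)\le b/a$ after a WLOG, which incidentally records the marginally stronger point that the ratio actually contracts toward $1$ --- a refinement the paper does not need downstream.
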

\begin{proof}
By \eqref{eq:l1eq1}, it directly follows that
\begin{align*}
l_k(p_v) &= l_k(p_u)+l_k(e) \le (1+\epsilon) l_k(p_u') + l_k(e) \\
&\le (1+\epsilon) ( l_k(p_u') + l_k(e) ) = (1+\epsilon) l_k(p_v') \\
l_k(p_v') &= l_k(p_u')+l_k(e) \le (1+\epsilon) l_k(p_u) + l_k(e) \\
&\le (1+\epsilon) ( l_k(p_u) + l_k(e) ) = (1+\epsilon) l_k(p_v)
\end{align*}
Combining the above two inequalities yields the proof.
\end{proof}


\begin{lemma} \label{lemma:multi_hops_general1}
For any $v \in V$, and for any path $s-v$ that is of at most $i$ edges, there exists an $s-v$ path $p' \in \Pi_v^i$ such that for each criterion $k \in \{1,2,\dots, d\}$,
\begin{equation*}
| \log l_k(p) - \log l_k(p') | \le i \log(1+\epsilon)
\end{equation*}
\end{lemma}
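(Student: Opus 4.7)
The plan is to proceed by induction on $i$. The base case $i = 0$ is trivial: the only $s$-path with zero edges is $s$ itself, which sits in $\Pi_s^0$ with zero error. For the inductive step, assume the claim holds for all indices below $i$ and take an arbitrary $s$-$v$ path $p$ with at most $i$ edges. If $hop(p) < i$, the inductive hypothesis at level $i-1$ directly furnishes a representative $p' \in \Pi_v^{i-1}$ with error $(i-1)\log(1+\epsilon) \le i\log(1+\epsilon)$, and this $p'$ carries over into $\Pi_v^i$ because the algorithm begins each iteration by copying $\Pi_v^{i-1}$ into $\Pi_v^i$.

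The interesting case is $hop(p) = i$. I would decompose $p = p_u \cdot e$ into the $(i-1)$-edge prefix $p_u$ ending at some vertex $u$ and the terminal edge $e = (u,v)$, and then apply the hypothesis at level $i-1$ to $p_u$ to obtain $p_u' \in \Pi_u^{i-1}$ with $|\log l_k(p_u) - \log l_k(p_u')| \le (i-1)\log(1+\epsilon)$ for every $k$. The same chain of inequalities that proves Lemma \ref{lemma:single_hop}, but with multiplicative slack $(1+\epsilon)^{i-1}$ in place of $(1+\epsilon)$ (which is legitimate since $cx + y \le c(x+y)$ whenever $c \ge 1$ and $x,y \ge 0$), then shows that the extension $\tilde p := p_u' \cdot e$ inherits the bound $|\log l_k(p) - \log l_k(\tilde p)| \le (i-1)\log(1+\epsilon)$.

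It remains to locate a table entry close to $\tilde p$. Let $j = hop(p_u') \le i-1$; since $p_u'$ can only have been inserted at iteration $j$, it belongs to every subsequent $\Pi_u^{j}, \Pi_u^{j+1}, \ldots, \Pi_u^{i-1}$. At iteration $j+1$ the inner loop enumerates all $j$-edge paths in $\Pi_u^j$ and in particular examines $\tilde p$ at slot $\hvec(\tilde p)$ of $\Pi_v^{j+1}$: either the slot is empty and $\tilde p$ is written there, or a previously inserted $p'$ with identical hash already occupies it, in which case the definition of $\hvec$ forces $|\log l_k(\tilde p) - \log l_k(p')| \le \log(1+\epsilon)$. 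Either way $p' \in \Pi_v^{j+1} \subseteq \Pi_v^i$, and a triangle inequality closes the induction with $|\log l_k(p) - \log l_k(p')| \le i\log(1+\epsilon)$.

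The main obstacle I anticipate is this last bookkeeping step: the algorithm's inner loop at iteration $i$ only processes exactly $(i-1)$-edge paths, so when $p_u'$ is shorter one has to push the relevant insertion back to iteration $hop(p_u') + 1$ and rely on the copy step to carry the representative all the way up to $\Pi_v^i$. The remaining pieces --- the small generalization of Lemma \ref{lemma:single_hop} and the final triangle inequality --- are routine.
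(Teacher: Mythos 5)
Your proof is correct and follows essentially the same route as the paper's: induction on the number of edges, splitting off the last edge, applying the induction hypothesis to the prefix, propagating the bound through the appended edge via the (generalized) Lemma~\ref{lemma:single_hop} argument, and absorbing one extra factor of $(1+\epsilon)$ from the hash collision. Your final bookkeeping step is in fact slightly more careful than the paper's, which tacitly assumes the slot $\Pi_v^{i^*}(\hvec(p_v'))$ is non-null; you correctly observe that when $p_u'$ has fewer than $i-1$ edges its extension is examined at the earlier iteration $hop(p_u')+1$ and the resulting entry is carried forward by the copy step.
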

\begin{proof}
We prove this lemma by induction on the number of edges in a path.  
The base case is a direct result from the definition of the hash function $\hvec$.
For the induction step, consider vertex $v \in V$, and we would like to prove the statement is correct for any $s-v$ sub-paths of at most $i^*$ edges, given the statement is true for any $i<i^*$ and any vertices.  Let $p_v$ be any $s-v$ path no more than $i^*$ edges.  If $p_v$ is of $i' < i^*$ edges, then by the induction hypothesis, there exists an $s-v$ path $p_v'$ in $\Pi_v^{i'}$ such that $|\log(l_k(p_v)/l_k(p_v'))| \le i' \log(1+\epsilon)$.  The lemma is proved for this case since $p_v'$ is in $\Pi_v^{i^*}$ as well.  Next consider $p_v$ to be an $s-v$ path with $i^*$ edges, and suppose $p_v$ is formed by appending the edge $(u, v)$ to an $s-u$ path $p_u$.  By the induction hypothesis, we can see that there is an $s-u$ path $p_u' \in \Pi_u^{i^*-1}$ such that for each criterion $k \in \{1,2,\dots, d\}$,
\begin{equation*}
| \log l_k(p_u) - \log l_k(p_u') | \le (i^*-1) \log(1+\epsilon)
\end{equation*}
Then, with Lemma~\ref{lemma:single_hop}, as we form $s-v$ sub-paths $p_v$ and $p_v'$ by appending $(u,v)$ to the end of $p_u$ and $p_u'$ respectively, we have the following relationship for each criterion $k \in \{1,2,\dots, d\}$:
\begin{equation} \label{eq:bounded_ratio}
| \log l_k(p_v) - \log l_k(p_v') | \le (i^*-1) \log(1+\epsilon)
\end{equation}
Consider any $s-v$ path $p_v''$ such that $\hvec(p_v'')=\hvec(p_v')$.  
We must have $| \log l_k(p_v') - \log l_k(p_v'') | \le \log(1+\epsilon)$ for each criterion $k \in \{1,\dots,d\}$, according to the definition of the hash function $\hvec$.
Combining this with \eqref{eq:bounded_ratio}, we can see that even for the case that path $p_v'$ is not stored in the table $\Pi_v^{i^*}$, the path $p_v'' = \Pi_v^{i^*}(h(p_v'))$ must satisfy
\begin{equation*}
|\log l_k(p_v) - \log l_k(p_v'') | \le i^* \log(1+\epsilon)
\end{equation*}
for each criterion $k \in \{1,\dots,d\}$.  This indicates that for any $s-v$ path $p_v$ within $i^*$ edges, there exists a path $p_v''$ in $\Pi_v^{i^*}$ such that their ratio is bounded by $(1+\epsilon)^{i^*}$. Q.E.D.
\end{proof}

Now, we are ready to state and prove the main result:
\begin{theorem}\label{thm:main}
Suppose $g$ defined in \eqref{eq:deflowrank} is $\beta$-Lipschitz on the log-log scale.  Then Algorithm~\ref{alg:nosp_digraph} is a $(1+\epsilon)^{\beta d \gamma}$-approximation algorithm for the $\gamma$-hop-constrained nonlinear objective shortest path problem.  Besides, Algorithm~\ref{alg:nosp_digraph} has time complexity $O(\gamma m(\frac{\log (\gamma C)}{\log (1+\epsilon)})^d T_\Pi)$, where $C=\max_{k\in\{1,2,\dots,d\}} C_k$.
\end{theorem}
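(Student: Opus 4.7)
The plan is to split the proof into two independent parts: first establish the $(1+\epsilon)^{\beta d \gamma}$ approximation guarantee, then bound the running time by counting operations in the nested loops of Algorithm~\ref{alg:nosp_digraph}.

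For the approximation guarantee, the strategy is to apply Lemma~\ref{lemma:multi_hops_general1} to the true optimal $\gamma$-hop path $p^*$. Since $p^*$ has at most $\gamma$ edges, the lemma produces an $s$--$t$ path $p' \in \Pi_t^\gamma$ such that $|\log l_k(p^*) - \log l_k(p')| \le \gamma \log(1+\epsilon)$ for each criterion $k \in \{1,\dots,d\}$. Summing over $k$ yields $\|\log \lvec(p^*) - \log \lvec(p')\|_1 \le d\gamma \log(1+\epsilon)$. Plugging this into the $\beta$-Lipschitz bound \eqref{eq:lipschitz} gives
\begin{equation*}
|\log g(\lvec(p^*)) - \log g(\lvec(p'))| \le \beta d \gamma \log(1+\epsilon),
\end{equation*}
so $f(p') = g(\lvec(p')) \le (1+\epsilon)^{\beta d \gamma} g(\lvec(p^*)) = (1+\epsilon)^{\beta d \gamma} f(p^*)$. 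Since Algorithm~\ref{alg:nosp_digraph} returns the best (minimum $f$-value) path in $\Pi_t^\gamma$, its output has objective value no worse than $f(p')$, which completes the approximation bound.

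For the running time, I would simply count the work done at each level of the nested loops. The outer iteration variable $i$ ranges over $\gamma$ values. Inside iteration $i$, the algorithm scans every path stored in $\bigcup_{v \in V} \Pi_v^{i-1}$ and tries to extend it by each outgoing edge; since each table $\Pi_v^{i-1}$ has at most $|\Gamma|$ entries, the total number of (path, outgoing edge) extension attempts in iteration $i$ equals $\sum_{v \in V} |\Gamma| \cdot \deg^+(v) = |\Gamma| \cdot m$. Each such attempt performs a constant number of table lookups/updates at cost $T_\Pi$. Substituting $|\Gamma| = \prod_{k=1}^d (\lfloor \log_{1+\epsilon}(\gamma C_k) \rfloor + 1) = O\!\bigl((\log(\gamma C)/\log(1+\epsilon))^d\bigr)$ and multiplying by the $\gamma$ outer iterations yields the claimed $O\!\bigl(\gamma m (\log(\gamma C)/\log(1+\epsilon))^d T_\Pi\bigr)$ bound.

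Neither part involves a genuine obstacle: Lemma~\ref{lemma:multi_hops_general1} already packages the delicate inductive propagation of approximation error, so the approximation argument is a one-line composition with the Lipschitz hypothesis, and the time bound is pure bookkeeping. The only small subtlety worth flagging explicitly is that the pseudocode's final line, written as a maximum, should be read as a minimum for the minimization problem \eqref{eq:prob_form}; the guarantee $f(p') \le (1+\epsilon)^{\beta d \gamma} f(p^*)$ holds in any case because $p'$ is among the candidates considered.
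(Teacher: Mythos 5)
Your proposal is correct and follows essentially the same route as the paper's own proof: invoke Lemma~\ref{lemma:multi_hops_general1} for the optimal $\gamma$-hop path to get a per-criterion $\gamma\log(1+\epsilon)$ bound, sum over the $d$ criteria, feed the resulting $L1$ bound into the Lipschitz condition \eqref{eq:lipschitz}, and separately count $m|\Gamma|$ extension attempts per iteration at cost $T_\Pi$ each over $\gamma$ iterations. Your explicit note that the final line of the pseudocode should be read as a minimum is a fair observation but does not change the argument.
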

\begin{proof}
By Lemma~\ref{lemma:multi_hops_general1}, we can see that for any $s-t$ path $p$ of at most $\gamma$ edges, there exists an $s-t$ path $p' \in \Pi_t^\gamma$ such that
\begin{equation*}
| \log l_k(p) - \log l_k(p') | \le \gamma \log (1+\epsilon), \quad \forall k=1,2,\dots ,d
\end{equation*}

It then follows from \eqref{eq:lipschitz} that our algorithm finds a path $p'$ such that $| \log f(p') - \log f(p^*) | \le \beta d \gamma \log (1+\epsilon)$.
Then, we have
\begin{equation*}
f(p') \le (1+\epsilon)^{\beta d \gamma}f(p^*)
\end{equation*}
This shows that $p'$ is an $(1+\epsilon)^{\beta d \gamma}$-approximation to $p^*$.

Next, we show the time complexity.  It is easy to see that at each iteration, the algorithm generates at most $m|\Gamma|$ sub-paths.  Each sub-path requires at most $O(T_\Pi)$ time to operate.  Since $|\Gamma|=(\frac{\log (\gamma C)}{\log(1+\epsilon)})^d$, we can conclude that Algorithm~\ref{alg:nosp_digraph} runs in time $O(\gamma m(\frac{\log (\gamma C)}{\log (1+\epsilon)})^d T_\Pi)$.
\end{proof}

Note that we can obtain a $(1+\delta)$-approximation algorithm by setting $\delta = (1+\epsilon)^{\beta d \gamma}-1$.  With this setting, then Algorithm~\ref{alg:nosp_digraph} has time complexity $O(\gamma m(\frac{\beta d \gamma \log(\gamma C)}{\delta})^d T_\Pi)$.  If we assume $\gamma=O(n)$, then Algorithm~\ref{alg:nosp_digraph} is indeed an FPTAS.

\subsection{Linear Constraint} \label{subsec:lc}
The hop constraint we considered can be relaxed:  we now consider the more general problem that any returned path $p$ must satisfy an additional linear constraint 
$\sum_{e \in p} w_{e,d+1} \le b,$ 
for some budget $b>0$ and some weight vector $\wvec_{d+1} \in \R_{++}^m$ with strictly positive entries.  We define the $(d+1)$-th criterion $l_{d+1}(p) = \sum_{e \in p} w_{e, d+1}$, which can be interpreted as having a budget $b$ for a path.  To solve this problem, we can find an upper bound of the number of edges for any feasible $s-t$ path.  A simple upper bound can be given by 
\begin{equation} \label{eq:hop_bound}
\gamma=\bigg\lceil\frac{b - \sum_{e\in p_{d+1}^*} w_{e,d+1}}{\min_{e \in E} w_{e,d+1}} + hop_{\min} \bigg\rceil
\end{equation}
where $p_{d+1}^* = \arg\min_{p\in\paths} \sum_{e\in p} w_{e,d+1}$ is the shortest path with respect to the weight $\wvec_{d+1}$, and $hop_{\min}$ is the minimum number of edges for any $s-t$ path.  Note that either $hop_{\min}$ and $p_{d+1}^*$ can be solved by Dijkstra's or Bellman-Ford's shortest path algorithms efficiently.

Then, we can adapt the techniques discussed for the hop-constrained problem but with some modifications.  Instead of assigning the least hop path to a configuration, we now assign the path with the least $(d+1)$-th criterion.  Formally, the modification is to change line~\ref{algn:feasible_check} and line~\ref{algn:assign} of Algorithm~\ref{alg:nosp_digraph} to :  Assign $p'$ to $\Pi_u^i(\yvec)$ if $p'$ satisfies both of the following:
\begin{itemize}
\item $l_{d+1}(p') < b$
\item $\Pi_u^i(\yvec)$ is $null$ or $l_{d+1}(p') < l_{d+1}(\Pi_u^i(\yvec))$
\end{itemize}

As a corollary of Theorem~\ref{thm:main}, we have the following result:
\begin{corollary} \label{cor:linear}
Algorithm~\ref{alg:nosp_digraph} with the modifications stated above is a $(1+\epsilon)^{\beta d \gamma}$-approximation algorithm for the linear-constrained nonlinear objective shortest path problem, where $\gamma$ is defined as in Eq. \eqref{eq:hop_bound}.
\end{corollary}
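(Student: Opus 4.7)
The plan is to mirror the argument for Theorem~\ref{thm:main}, but with an augmented invariant that also tracks feasibility with respect to the extra linear constraint. First I would observe that the bound $\gamma$ from Eq.~\eqref{eq:hop_bound} is constructed so that every feasible $s$--$t$ path uses at most $\gamma$ edges, so running the outer loop up to $i=\gamma$ is enough to reach the optimal feasible path $p^*$.

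The heart of the argument is a strengthened version of Lemma~\ref{lemma:multi_hops_general1} tailored to the modified update rule. I would prove by induction on $i$ the following invariant: for every $v \in V$ and every $s$--$v$ path $p$ with at most $i$ edges and $l_{d+1}(p) \le b$, there is a path $p' \in \Pi_v^i$ such that (a) $|\log l_k(p) - \log l_k(p')| \le i \log(1+\epsilon)$ for each $k = 1, \dots, d$, and (b) $l_{d+1}(p') \le l_{d+1}(p)$. Clause~(b) is the new ingredient. The inductive step proceeds exactly as in Lemma~\ref{lemma:multi_hops_general1}: write $p$ as $p_u$ extended by an edge $(u,v)$, apply the induction hypothesis to get a witness $p'_u \in \Pi_u^{i-1}$, append $(u,v)$ to obtain a candidate $q$, and use Lemma~\ref{lemma:single_hop} together with the subadditive identity $l_{d+1}(q) = l_{d+1}(p'_u) + w_{e,d+1} \le l_{d+1}(p_u) + w_{e,d+1} = l_{d+1}(p)$ to check that both clauses transfer to $q$. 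The modified update rule then guarantees that the entry actually kept at $\Pi_v^i(\hvec(q))$ is either $q$ or a path with even smaller $l_{d+1}$; in either case (a) and (b) pass to the stored representative by the same log-scale gridding bound used in the proof of Lemma~\ref{lemma:multi_hops_general1}.

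Given the invariant, the conclusion follows as in Theorem~\ref{thm:main}. Applied to the optimal feasible path $p^*$ at $v=t$ and $i=\gamma$, it produces a path $p' \in \Pi_t^\gamma$ with $l_{d+1}(p') \le l_{d+1}(p^*) \le b$, so $p'$ is a legal output, together with $|\log l_k(p^*) - \log l_k(p')| \le \gamma \log(1+\epsilon)$ for every $k = 1, \dots, d$. The $\beta$-Lipschitz property~\eqref{eq:lipschitz} then yields $f(p') \le (1+\epsilon)^{\beta d \gamma} f(p^*)$, which is the claimed approximation factor.

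The main obstacle is clause~(b) of the invariant: the unmodified algorithm was free to keep any representative inside a hypercube, but here the surviving representative must also be feasible in order to serve as a legal answer for $p^*$. Tracking the minimum $l_{d+1}$ in each configuration is precisely what is needed, and the short subadditive calculation above is what bridges the modified update rule to the induction pattern of Lemma~\ref{lemma:multi_hops_general1}.
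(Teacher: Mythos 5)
Your proposal is correct and follows essentially the same route as the paper: the appendix proves exactly your strengthened invariant as Lemma~\ref{lemma:multi_hops_linear} (log-scale criterion bounds plus $l_{d+1}(p') \le l_{d+1}(p)$, established by the same induction with the same subadditive step and the same use of Lemma~\ref{lemma:single_hop}), and then concludes via the Lipschitz condition just as in Theorem~\ref{thm:main}. Your explicit restriction of the invariant to paths with $l_{d+1}(p) \le b$ is a minor (and arguably cleaner) refinement of the paper's statement, not a different argument.
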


\section{Application: Deadline Problem}
With the techniques mentioned in the above section, we are able to solve various problems, including the minimum cost-to-time ratio path problem with either a hop or a linear budget constraint.  In this section, we discuss a related problem, which is the \emph{deadline problem} in a stochastic setting.

The deadline problem is to find a path that is the most likely to be on time, which is defined as follows: Suppose each edge is associated with an independent random variable $X_e$, which we call the travel time of edge $e$.  Given a deadline $D>0$, we are going to find an $s-t$ path such that the probability of being on time is maximized:  $\max_{p \in \paths_t} \Pr(\sum_{e\in p} X_e < D) $.
We assume that the random variables $\{X_e\}$ are Gaussian.  Based on this, we denote the mean travel time for edge $e$ as $\mu_e$ and variance as $\sigma_e^2$.  For simplicity, we assume for each edge $e \in E$, its mean $\mu_e$ and variance $\sigma_e^2$ are bounded by $\mu_e \in [\mu_{\min}, \mu_{\max}]$, and $\sigma_e^2 \in [\sigma_{\min}^2, \sigma_{\max}^2]$.   The problem can hence be written as:
\begin{equation}\label{eq:deadlineprob_def2}
\max_{p \in \paths_t} \Phi\bigg(\frac{D - \sum_{e\in p} \mu_e}{\sqrt{\sum_{e\in p} \sigma_e^2}}\bigg)
\end{equation}
where $\Phi(\cdot)$ is the cumulative distribution function for the standard normal distribution.  The objective function in \eqref{eq:deadlineprob_def2} is a rank $2$ function.  This problem is studied in \cite{Nikolova:2006aa} for a risk-averse shortest path problem, where they considered the case that the optimal path has mean travel time less than the deadline: for this case \eqref{eq:deadlineprob_def2} is a monotonic decreasing function.  In this paper, we consider the case that all feasible $s-t$ paths have mean travel time strictly greater than the deadline.  This means that we assume each path has poor performance.  Then, the objective function in \eqref{eq:deadlineprob_def2} becomes monotonic decreasing with respect to the mean but monotonic increasing with respect to the variance, for which known algorithms no longer apply.  However, we are able to use techniques presented in this paper to solve this problem under either a hop or an additional linear constraint.
\begin{theorem}\label{thm:deadline}
Suppose each $s-t$ path $p \in \paths_t$ satisfies $\sum_{e \in p} \mu_e > D$ and $\sum_{e \in p} \sigma_e^2 > S$.  For the deadline problem with a hop constraint $\gamma$, if Algorithm~\ref{alg:nosp_digraph} returns some path $p'$ such that $f(p') > \Phi(-3) \approx 0.0013$, then $p'$ is an $\alpha$-approximation for the deadline problem, where
$\alpha = \min\{384.62, (1+\epsilon)^{6.568(3+\frac{D}{\sqrt{S}})\gamma})\}$.  Similarly, if $p'$ is such that $f(p') > \Phi(-2) \approx 0.023$, then $\alpha = \min\{21.93, (1+\epsilon)^{4.745(2+\frac{D}{\sqrt{S}})\gamma}\}$.
\end{theorem}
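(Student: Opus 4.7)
My plan is to cast the deadline problem in the rank-$2$ framework of this paper, with criteria $l_1(p) = \sum_{e \in p} \mu_e$ and $l_2(p) = \sum_{e \in p} \sigma_e^2$ and objective $g(l_1, l_2) = \Phi\bigl((D - l_1)/\sqrt{l_2}\bigr)$. Writing $z(p) = (D - l_1(p))/\sqrt{l_2(p)}$, the standing assumption $\sum_{e \in p} \mu_e > D$ forces $z(p) < 0$ and hence $f(p) < \Phi(0) = 1/2$ for every feasible path, including the optimum $p^*$. Combined with the hypothesis $f(p') > \Phi(-c)$ for $c \in \{2, 3\}$, this immediately gives the ``trivial'' bound
\[
\frac{f(p^*)}{f(p')} < \frac{1/2}{\Phi(-c)},
\]
which, using $\Phi(-3) \approx 0.0013$ and $\Phi(-2) \approx 0.0228$, evaluates to the constants $384.62$ and $21.93$ appearing as the first arguments of the $\min$.

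To prove the exponential term, I would apply Theorem~\ref{thm:main} with $d = 2$ after computing a Lipschitz constant of $\log g$ in the log-log scale on the relevant region. A direct differentiation gives
\begin{align*}
\frac{\partial \log g}{\partial \log l_1} &= -\frac{l_1}{\sqrt{l_2}}\,\frac{\phi(z)}{\Phi(z)}, \\
\frac{\partial \log g}{\partial \log l_2} &= -\frac{z}{2}\,\frac{\phi(z)}{\Phi(z)}.
\end{align*}
The identity $l_1/\sqrt{l_2} = |z| + D/\sqrt{l_2}$ combined with $l_2 > S$ and $z \geq -c$ yields $l_1/\sqrt{l_2} \leq c + D/\sqrt{S}$, and monotonicity of the inverse Mills ratio $\phi/\Phi$ gives $\phi(z)/\Phi(z) \leq \phi(-c)/\Phi(-c)$ on $\{z \geq -c\}$. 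Combining these, $\beta := (c + D/\sqrt{S}) \cdot \phi(-c)/\Phi(-c)$ is a valid Lipschitz constant. Numerically, $2\beta \approx 6.568\,(3 + D/\sqrt{S})$ for $c = 3$ (using $\phi(-3)/\Phi(-3) \approx 3.284$) and $2\beta \approx 4.745\,(2 + D/\sqrt{S})$ for $c = 2$, matching the exponents stated. Invoking Theorem~\ref{thm:main} (adapted for maximization) then delivers the exponential approximation ratio $(1+\epsilon)^{2\beta\gamma}$.

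The main technical obstacle is that this Lipschitz estimate only holds on the subregion $\{z \geq -c\}$ of the path polytope rather than globally, whereas Theorem~\ref{thm:main} needs the Lipschitz constant to govern both $\lvec(p^*)$ and the approximating neighbor $p'' \in \Pi_t^\gamma$ supplied by Lemma~\ref{lemma:multi_hops_general1}. The hypothesis $f(p') > \Phi(-c)$ forces $z(p^*) \geq -c$ but does not a priori control $z(p'')$, so one must argue that the segment from $\log \lvec(p^*)$ to $\log \lvec(p'')$ in the log-log plane remains in the good region for the Lipschitz integration to apply. Whenever this analysis degenerates, the trivial bound $1/(2\Phi(-c))$ is still valid, and the $\min$ in the statement is precisely what lets these two independently-proved bounds combine into the final guarantee.
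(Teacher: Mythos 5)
Your setup coincides with the paper's: the trivial bound $f(p^*)/f(p') < \tfrac{1}{2}/\Phi(-c)$ giving the constants $384.62$ and $21.93$, and the log-log Lipschitz computation with constant $(c + D/\sqrt{S})\cdot\phi(-c)/\Phi(-c)$ on the region where $z \ge -c$, are both exactly what the paper does (its Lemma~\ref{lemma:norm_lipschitz} shows $\Phi((D-x)/\sqrt{y})$ is $3.284(3+D/\sqrt{S})$-Lipschitz on $\Cregion = \{x > D,\ y > S,\ x - D \le 3\sqrt{y}\}$, and the factor $2$ in the exponent is $d=2$). You also correctly identify the crux: the neighbor $p''$ of $p^*$ produced by Lemma~\ref{lemma:multi_hops_general1} need not lie in $\Cregion$, so the Lipschitz bound cannot be applied between $\lvec(p^*)$ and $\lvec(p'')$ directly.

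But your resolution of that crux is a genuine gap. Falling back on the trivial bound ``whenever this analysis degenerates'' does not prove the theorem: $\min\{384.62,\ (1+\epsilon)^{6.568(3+D/\sqrt{S})\gamma}\}$ is a valid approximation factor only if \emph{both} bounds hold in every case, and for small $\epsilon$ the exponential term is far below $384.62$, so in the degenerate case your argument leaves only the weaker constant and the claimed $\min$ is unproven. The paper closes this case with an extra idea: if $\lvec(p'') \notin \Cregion$, then since $\lvec(p^*) \in \Cregion$ (forced by $f(p^*) \ge f(p') > \Phi(-3)$) and both points lie in the same $\gamma\log(1+\epsilon)$ box in log-coordinates, there is an intermediate point $(x_0,y_0)$ on the boundary line $x - D = 3\sqrt{y}$ satisfying $|\log(x_0/l_1(p^*))| \le \gamma\log(1+\epsilon)$ and $|\log(y_0/l_2(p^*))| \le \gamma\log(1+\epsilon)$. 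Applying the Lipschitz bound between $\lvec(p^*)$ and $(x_0,y_0)$, both of which lie in $\Cregion$, gives $f(p^*) \le (1+\epsilon)^{6.568(3+D/\sqrt{S})\gamma}\,\Phi(-3)$; since the value of the objective at $(x_0,y_0)$ is exactly $\Phi(-3)$ and the returned path satisfies $f(p') > \Phi(-3)$ by hypothesis, the exponential bound holds in this case as well. That boundary-crossing argument is the missing step; with it, both bounds hold unconditionally and the $\min$ is justified.
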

Note that this theorem does not guarantee Algorithm~\ref{alg:nosp_digraph} admits FPTAS for the deadline problem, since given $\alpha$, the running time is polynomial in the factor $D/\sqrt{S}$.

\section{Experimental Evaluations}

\begin{figure}[t]
    \centering
    \includegraphics[width=\linewidth]{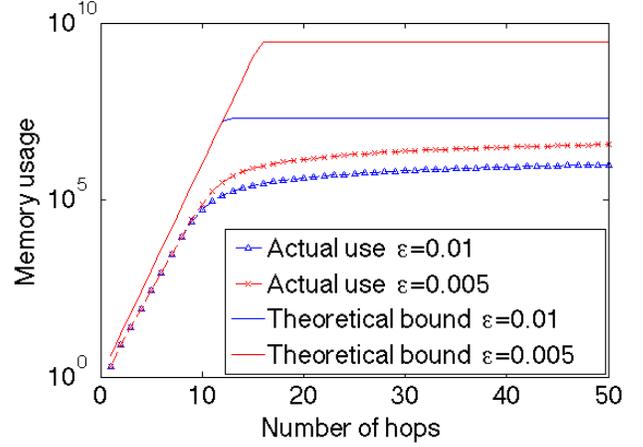}
    \caption{Memory usage for running our algorithm on a $5\times 5$ grid graph to the number of iterations.}
    \label{fig:memory}
\end{figure}

\begin{table}[t]
    \centering
    \caption{Deadline Problem with Hop Constraint}
    \label{table:deadline}
    \footnotesize
    \begin{tabular}{l | c | r | r | r}
        \hline
        Network	& Hops & Worst-case  		& Memory	& Run \\
        			&	& Theoretical		& Usage & time (s)\\
        			& 	& Accuracy ($\alpha$)& ($10^6$ paths) &  \\
        \hline \hline
        $5 \times 5$ Grid	& $12$ & (Exhaustive) & $1.63$ & $3.441$ \\
        \cline{2-5}
        					& $12$ & $2.2923$ & $0.67$ & $2.057$ \\
        \cline{2-5}
        					& $15$ & $2.2926$ & $4.30$ & $20.276$ \\
        \hline
        $10 \times 10$ Grid 	& $15$ & $2.2891$ & $18.05$ & $83.430$ \\
        \hline
        Anaheim 			& $15$ & $3.0436$ & $9.60$ & $37.671$\\
        \hline
    \end{tabular}
\end{table}

We tested our algorithm on a grid graph and a real transportation network.  The algorithm is implemented in C++.  Experiments are run on an Intel Core i7 1.7 GHz (I7-4650U) processor.  Due to memory constraints, the table $\{\Pi_v^i|v \in V\}$ is implemented as a binary search tree, where query, insertion, and update operations can be done in time $T_\Pi = O(\log(n|\Gamma|))$.  Memory turns out to be the limiting factor for the problem size and accuracy available to our experiments.  In this section, \emph{memory usage} refers to the total number of sub-paths maintained in the table $\{\Pi_v^i|v \in V\}$.

We tested the deadline problem with a hop constraint on a $5 \times 5$ grid graph, $10 \times 10$ grid graph, and the Anaheim network ($416$ vertices, $914$ edges) from the real world data set \cite{Bar-Gera:2002aa}.  The grid graphs are bi-directional, and the mean and the variance for each edge are randomly generated from $[0.1, 5]$.  We routed from node $(0,0)$ to node $(4,4)$ on the $5 \times 5$ grid, and from $(1,1)$ to $(8,8)$ on the $10 \times 10$ grid.  The Anaheim dataset provides the mean travel time. The variance for each edge is randomly generated from $0$ to the mean.  The source and destination are randomly chosen.

Table~\ref{table:deadline} summarizes the results on the datasets, which are the average of $20$ experiments.  Even though theoretically the algorithm can be designed for arbitrary desired accuracy, it is limited by the memory of the environment.\footnote{It takes about $3$GB physical memory usage for the $18$M paths in the $10\times10$ grid network.}  However, comparing with exhaustive search, we found that the practical accuracy is $100\%$ on $5 \times 5$ grid graphs.\footnote{Exhaustive search becomes prohibitive for paths with more than $12$ hops on $5 \times 5$ grid graphs.}
This means that our algorithm is far more accurate than its theoretical worst case bound.  We explain this phenomenon through Figure~\ref{fig:memory}, which shows the number of sub-paths generated up to each iteration.  We see that the number of sub-paths generated in the initial iterations grows exponentially, just as in exhaustive search, but levels off eventually.  Recall Lemma~\ref{lemma:multi_hops_general1}, which tells us that an error is induced when a sub-path is eliminated due to similarity to a saved sub-path, and Figure~\ref{fig:memory} suggests that this rarely happens in the initial iterations.


\section{Conclusion}
We investigated the shortest path problem with a general nonlinear and non-monotonic objective function.  We proved that this problem is hard to approximate for simple paths and consequently, 
we designed and analyzed a fully polynomial approximation scheme for finding the approximately optimal non-simple path under either a hop constraint or an additional linear constraint.  We showed that this algorithm can be applied to the cost-to-time ratio problem and the deadline problem.  Our experiments show that our algorithm is capable of finding good approximations efficiently.  

\section{Acknowledgement}
This work was supported in part by NSF grant numbers CCF-1216103, CCF-1350823, CCF-1331863, and a Google Faculty Research Award.

\bibliography{nonlinear_sp_arxiv} 
\bibliographystyle{aaai}

\section{Appendix: Missing Proofs}
\subsection{Proof of Corollary~\ref{cor:linear}}
We need the following modified version of Lemma~\ref{lemma:multi_hops_general1} to prove Corollary~\ref{cor:linear}:
\begin{lemma} \label{lemma:multi_hops_linear}
For any $v \in V$, and for any path $s-v$ that is of at most $i$ edges, there exists an $s-v$ path $p' \in \Pi_v^i$ such that
\begin{equation*}
\frac{1}{(1+\epsilon)^i} \le \frac{l_k(p)}{l_k(p')} \le (1+\epsilon)^i, \quad \forall k=1,2,\dots ,d
\end{equation*}
and
\begin{equation*}
l_{d+1}(p') \le l_{d+1}(p)
\end{equation*}
\end{lemma}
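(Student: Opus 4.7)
The plan is to mirror the inductive proof of Lemma~\ref{lemma:multi_hops_general1} on the path length $i$, but carry an extra invariant on the $(d+1)$-th criterion throughout. The base case $i=0$ is immediate since only the trivial path $\{s\}$ is relevant, and it has all criteria zero. For the inductive step with an $s$--$v$ path $p$ of at most $i^*$ edges, I would split into the two cases used in the original proof. If $p$ has strictly fewer than $i^*$ edges, the induction hypothesis at $i^*-1$ delivers a witness $p'$, which remains in $\Pi_v^{i^*}$ because the modified algorithm still copies $\Pi_v^{i^*} \gets \Pi_v^{i^*-1}$ at the start of each outer iteration.

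The interesting case is when $p$ has exactly $i^*$ edges, so write $p = p_u \cdot (u,v)$. Apply the induction hypothesis to $p_u$: there exists $p_u' \in \Pi_u^{i^*-1}$ with $(1+\epsilon)^{i^*-1}$-bounded ratios on criteria $1,\dots,d$ and $l_{d+1}(p_u') \le l_{d+1}(p_u)$. Form $p_v' = p_u' \cdot (u,v)$. Lemma~\ref{lemma:single_hop} applies verbatim to each of the $d$ criteria, so $p$ and $p_v'$ still enjoy an $(1+\epsilon)^{i^*-1}$-bounded ratio on those criteria; for the new invariant, adding the same edge weight to both sides gives $l_{d+1}(p_v') = l_{d+1}(p_u') + w_{(u,v),d+1} \le l_{d+1}(p_u) + w_{(u,v),d+1} = l_{d+1}(p) \le b$, so $p_v'$ is eligible for insertion at iteration $i^*$.

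Finally, let $p_v'' = \Pi_v^{i^*}(\hvec(p_v'))$. By the modified replacement rule, whichever path occupies this cell has $(d+1)$-th criterion no larger than that of $p_v'$, hence $l_{d+1}(p_v'') \le l_{d+1}(p_v') \le l_{d+1}(p)$; meanwhile, hash equality with $p_v'$ forces pairwise ratios within $(1+\epsilon)$ on the first $d$ criteria. Composing the two inequalities yields the desired $(1+\epsilon)^{i^*}$ bound between $p$ and $p_v''$.

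The main delicacy is not the bound itself but the bookkeeping of the linear budget: one needs $p$ itself to satisfy $l_{d+1}(p) \le b$ for the invariant to be propagated forward, which is precisely the setting in which the lemma will be applied (to approximate some feasible target). A secondary point of care is that the strict inequality $l_{d+1}(p') < l_{d+1}(\Pi_u^i(\yvec))$ in the modified insertion rule is compatible with the non-strict $\le$ we are propagating: since ties at the same hash are broken in favor of the incumbent, the incumbent's $(d+1)$-th criterion is still no larger than $p_v'$'s, preserving $\le$. Both points are straightforward once noted, and they are the only substantive departures from Lemma~\ref{lemma:multi_hops_general1}.
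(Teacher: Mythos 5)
Your overall strategy is the same as the paper's: induct on the number of edges, carry the extra invariant $l_{d+1}(p') \le l_{d+1}(p)$ alongside the ratio bounds, and handle the exactly-$i^*$-edge case by appending $(u,v)$ to the inductive witness and then passing to whichever path occupies the hash cell under the modified replacement rule. Your Case 2 is correct and matches the paper, including the observation that the occupant of the cell can only have a smaller $(d+1)$-th criterion than $p_v'$.

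There is, however, one step that fails as written. In the case where $p$ has $i' < i^*$ edges, you assert that the inductive witness $p'$ ``remains in $\Pi_v^{i^*}$ because the modified algorithm still copies $\Pi_v^{i^*} \gets \Pi_v^{i^*-1}$.'' That persistence argument is valid for the original hop-constrained algorithm (where a cell, once filled, is never touched again), but it is exactly what the modified replacement rule breaks: at any iteration between $i'$ and $i^*$, the entry $\Pi_v^{\cdot}(\hvec(p'))$ may be overwritten by a different path with the same hash but strictly smaller $l_{d+1}$, so $p'$ itself need not survive. The paper's proof of this case is therefore deliberately different from the corresponding case of Lemma~\ref{lemma:multi_hops_general1}: it takes $p_v''$ to be whatever path currently sits at $\hvec(p_v')$ in $\Pi_v^{i^*}$, uses hash equality to get an extra factor of $(1+\epsilon)$ (yielding $(1+\epsilon)^{i'+1} \le (1+\epsilon)^{i^*}$, which is why the slack $i' < i^*$ matters), and uses the replacement rule to get $l_{d+1}(p_v'') \le l_{d+1}(p_v') \le l_{d+1}(p_v)$. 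The repair is thus the same hash-bucket argument you already use in Case 2, so the gap is local and fixable, but as stated your Case 1 relies on a false invariant of the modified algorithm. Your closing remarks on the budget bookkeeping and on tie-breaking are sound and, if anything, slightly more careful than the paper, which does not discuss the strict-versus-nonstrict boundary at $b$.
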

\begin{proof}
We prove this lemma by induction on the number of edges in a path as what we have done in Lemma~\ref{lemma:multi_hops_general1}.  The base case is a direct result from the definition of the hash function $\hvec$.  For the induction step, consider vertex $v \in V$, and we would like to prove the statement is correct for any $s-v$ sub-paths of at most $i^*$ edges, given the statement is true for any $i<i^*$ and any vertices.  

Let $p_v$ be any $s-v$ path of no more than $i^*$ edges.  First consider the case that the path $p_v$ is of $i' < i^*$ edges, then by induction hypothesis, there exists an $s-v$ path $p_v'$ in $\Pi_v^{i'}$ such that $|l_k(p_v)/l_k(p_v')| \le (1+\epsilon)^{i'}$.  Then consider the $s-v$ path $p_v'' \in \Pi_v^{i^*}$ such that $\hvec(p_v'')=\hvec(p_v')$.  By the definition of the hash function $\hvec$, it must hold that
\begin{equation*}
\frac{1}{1+\epsilon} \le \frac{l_k(p_v')}{l_k(p_v'')} \le 1+\epsilon, \quad \forall k=1,2,\dots ,d
\end{equation*}
Then we have
\begin{equation*}
\frac{1}{(1+\epsilon)^{i'+1}} \le \frac{l_k(p_v)}{l_k(p_v'')} \le (1+\epsilon)^{i'+1}, \quad \forall k=1,2,\dots ,d
\end{equation*}
By the induction hypothesis and the replacement rule in the algorithm, we have $l_{d+1}(p_v'') \le l_{d+1}(p_v') \le l_{d+1}(p_v)$.  Therefore the lemma is proved for this case.

Next consider the case that $p_v$ is $s-v$ path with $i^*$ edges, and suppose $p_v$ is formed by appending the edge $(u, v)$ to an $s-u$ path $p_u$.  By the induction hypothesis, we can see that there is an $s-u$ path $p_u' \in \Pi_u^{i^*-1}$ such that
\begin{equation*}
\frac{1}{(1+\epsilon)^{i^*-1}} \le \frac{l_k(p_u)}{l_k(p_u')} \le (1+\epsilon)^{i^*-1}, \quad \forall k=1,2,\dots ,d
\end{equation*}
and $l_{d+1}(p_u') \le l_{d+1}(p_u)$.
Then, with Lemma~\ref{lemma:single_hop}, as we form the $s-v$ sub-paths $p_v$ and $p_v'$ by appending $(u,v)$ to the end of $p_u$ and $p_u'$ respectively, we have the following relationship:
\begin{equation} \label{eq:bounded_ratio2}
\frac{1}{(1+\epsilon)^{i^*-1}} \le \frac{l_k(p_v)}{l_k(p_v')} \le (1+\epsilon)^{i^*-1}, \quad \forall k=1,2,\dots ,d
\end{equation}
Consider any $s-v$ path $p_v''$ such that $\hvec(p_v'')=\hvec(p_v')$.  The following is always satisfied according to the definition of the hash function $\hvec$:
\begin{equation*}
\frac{1}{1+\epsilon} \le \frac{l_k(p_v')}{l_k(p_v'')} \le 1+\epsilon, \quad \forall k=1,2,\dots ,d
\end{equation*}
Combining this with \eqref{eq:bounded_ratio2}, we can see that the path $p_v'' = \Pi_v^{i^*}(h(p_v'))$ must satisfy
\begin{equation*}
\frac{1}{(1+\epsilon)^{i^*}} \le \frac{l_k(p_v)}{l_k(p_v'')} \le (1+\epsilon)^{i^*}, \quad \forall k=1,2,\dots ,d
\end{equation*}
and $l_{d+1}(p_v'') \le l_{d+1}(p_v)$.
This indicates that for any $s-v$ path $p_v$ which is within $i^*$ edges, there exists a path $p_v''$ in $\Pi_v^{i^*}$ such that their ratio is bounded by $(1+\epsilon)^{i^*}$, and $p_v''$ always has the smaller $(d+1)$-th weight than $p_v$. Q.E.D.
\end{proof}

With this lemma, we are able to prove Corollary~\ref{cor:linear}:
\begin{proof}[Proof of Corollary~\ref{cor:linear}]
By Lemma~\ref{lemma:multi_hops_linear}, we can see that for any $s-t$ path $p$ that is within $\gamma$ edges, there exists an $s-t$ path $p' \in \Pi_t^\gamma$ such that
\begin{equation*}
\bigg| \log \frac{l_k(p)}{l_k(p')} \bigg| \le \gamma \log (1+\epsilon) \quad \forall k=1,2,\dots ,d
\end{equation*}
and $l_{d+1}(p') \le l_{d+1}(p)$.  It then follows from \eqref{eq:lipschitz} that our algorithm finds a path $p'$ such that
\begin{equation*}
\bigg| \log \frac{f(p')}{f(p^*)} \bigg| \le \beta d \gamma \log (1+\epsilon)
\end{equation*}
Then, we have
\begin{equation*}
f(p') \le (1+\epsilon)^{\beta d \gamma}f(p^*)
\end{equation*}
Clearly $p'$ satisfies the constraint $l_{d+1}(p') \le b$ since $l_{d+1}(p') \le l_{d+1}(p^*) \le b$.  This shows that $p'$ is an $(1+\epsilon)^{\beta d \gamma}$-approximation to $p^*$.
\end{proof}

\subsection{Proof of Theorem~\ref{thm:deadline}}
Before we prove Theorem~\ref{thm:deadline}, we need the following lemma:
\begin{lemma} \label{lemma:norm_lipschitz}
The function $\Phi(\frac{D-x}{\sqrt{y}})$ is $3.284(3+\frac{D}{\sqrt{S}})$-Lipschitz on log-log scale with respect to $L1$ norm on the region $\Cregion = \{(x,y)|x>D, y>S, x-D\le 3\sqrt{y}\}$.
\end{lemma}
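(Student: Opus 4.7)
Write $h(x,y)=\Phi\!\left(\frac{D-x}{\sqrt{y}}\right)$. Since both coordinates live in an open convex region and we seek a log-log Lipschitz bound in the $L^1$ norm, the plan is to reduce to bounding the $L^\infty$ norm of the gradient of $\log h$ with respect to the variables $(\log x,\log y)$ via the mean value theorem; it then suffices to prove
\begin{equation*}
\left|\frac{\partial \log h}{\partial \log x}\right|,\ \left|\frac{\partial \log h}{\partial \log y}\right|\ \le\ 3.284\Bigl(3+\tfrac{D}{\sqrt{S}}\Bigr)\quad\text{on }\mathcal{C}.
\end{equation*}

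First I would set $z=(D-x)/\sqrt{y}$, note that on $\mathcal{C}$ one has $z\in[-3,0)$, and compute
$\partial_x h=-\phi(z)/\sqrt{y}$ and $\partial_y h=-\phi(z)(D-x)/(2y^{3/2})$, where $\phi$ is the standard normal density. Using $\partial_{\log x}=x\,\partial_x$ and $\partial_{\log y}=y\,\partial_y$, this gives
\begin{equation*}
\left|\frac{\partial \log h}{\partial \log x}\right|=\frac{x}{\sqrt{y}}\cdot\frac{\phi(z)}{\Phi(z)},\qquad
\left|\frac{\partial \log h}{\partial \log y}\right|=\frac{|z|}{2}\cdot\frac{\phi(z)}{\Phi(z)}.
\end{equation*}

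The main obstacle is controlling the hazard-ratio $\phi(z)/\Phi(z)$ in the lower tail, since this quantity grows as $|z|\to\infty$. This is exactly why the region $\mathcal{C}$ has the constraint $x-D\le 3\sqrt{y}$: it confines $z$ to $[-3,0)$, where $\phi(z)/\Phi(z)$ is bounded. I would verify (either by direct evaluation or by a standard Mills-ratio estimate such as $\phi(z)/\Phi(z)\le(z^2+1)/|z|$ combined with monotonicity checks) that $\phi(z)/\Phi(z)\le \phi(-3)/\Phi(-3)\approx 3.284$ on this interval, which supplies the leading constant in the claimed Lipschitz factor.

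Finally I would assemble the bound. For the $\partial_{\log x}$ term, writing $x/\sqrt{y}=D/\sqrt{y}+(x-D)/\sqrt{y}=D/\sqrt{y}+|z|$ and using $y>S$ and $|z|\le 3$ yields $x/\sqrt{y}\le D/\sqrt{S}+3$, so the whole expression is at most $3.284\bigl(3+D/\sqrt{S}\bigr)$. For the $\partial_{\log y}$ term, $|z|/2\le 3/2\le 3+D/\sqrt{S}$ gives the same bound (in fact a better one). Combining both coordinate bounds with the MVT along the straight segment in $(\log x,\log y)$-space between any two points of $\mathcal{C}$ (which remains in $\mathcal{C}$ by convexity of the log-transformed region — a quick check I would include) produces the desired $L^1$ Lipschitz estimate on the log-log scale, completing the proof.
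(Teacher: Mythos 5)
Your calculation is essentially identical to the paper's: the substitution $u=\log x$, $v=\log y$, the two logarithmic partial derivatives, the hazard-ratio bound $\phi(z)/\Phi(z)\le\phi(-3)/\Phi(-3)\approx 3.284$ for $z=(D-x)/\sqrt{y}\in[-3,0)$, and the split $x/\sqrt{y}=(x-D)/\sqrt{y}+D/\sqrt{y}\le 3+D/\sqrt{S}$ reproduce the paper's inequalities \eqref{eq:normfunc_ratio}--\eqref{eq:normfunc_4} exactly, so the gradient bounds are correct and obtained the same way.

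The one step you add that the paper leaves implicit is the passage from a pointwise gradient bound to an $L^1$ Lipschitz bound, and the justification you propose --- convexity of the log-transformed region --- would fail. The region $\Cregion'=\{(u,v): e^u>D,\ e^v>S,\ e^u-D\le 3e^{v/2}\}$ is \emph{not} convex: writing $a_i=e^{u_i}$ and $b_i=3e^{v_i/2}$, the midpoint of a segment in $(u,v)$-space corresponds to the geometric means $\sqrt{a_1a_2}$ and $\sqrt{b_1b_2}$, and for boundary points with $a_i=D+b_i$ one has $(D+b_1)(D+b_2)-(D+\sqrt{b_1b_2})^2=D(\sqrt{b_1}-\sqrt{b_2})^2>0$ whenever $D>0$ and $b_1\neq b_2$, so the midpoint violates $e^u-D\le 3e^{v/2}$. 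On the excluded side of that constraint $z<-3$ and $\phi(z)/\Phi(z)$ exceeds $3.284$ (it grows like $|z|$), so the mean-value argument along the straight segment does not go through as stated. To be fair, the paper's own proof has exactly the same unaddressed issue --- it simply asserts differentiability on $\Cregion'$ and concludes --- so this is a shared gap rather than a defect unique to your write-up; a clean fix requires either connecting points of $\Cregion'$ by a path inside $\Cregion'$ whose $L^1$ length is comparable to the $L^1$ distance, or restricting the claim to the particular pairs of points actually compared in the proof of Theorem~\ref{thm:deadline}.
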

\begin{proof}
We want to show that for any $(u_1, v_1), (u_2, v_2) \in \Cregion' = \{(u,v)| e^u>D, e^v>S, e^u-D \le 3\sqrt{e^v}\}$, 
\begin{align}
&\bigg|\log \Phi\bigg(\frac{D-e^{u_1}}{\sqrt{e^{v_1}}}\bigg) - \log \Phi\bigg(\frac{D-e^{u_2}}{\sqrt{e^{v_2}}}\bigg)\bigg| \nonumber \\
& \quad \le 3.284\bigg(3+\frac{D}{\sqrt{S}}\bigg) (|u_1-u_2| + |v_1-v_2|) \label{eq:norm_lipschitz}
\end{align}
We take the derivatives from the function $\log \Phi(\frac{D-e^{u}}{\sqrt{e^{v}}})$:
\begin{align}
\bigg| \frac{\partial}{\partial u} \log \Phi\bigg(\frac{D-e^{u}}{\sqrt{e^{v}}}\bigg) \bigg| &= \bigg| \frac{e^u}{\sqrt{e^v}} \cdot \frac{\Phi'\bigg(\frac{D-e^{u}}{\sqrt{e^{v}}}\bigg)}{\Phi\bigg(\frac{D-e^{u}}{\sqrt{e^{v}}}\bigg)} \bigg| \nonumber \\
& \le 3.284 \bigg( \frac{e^u - D}{\sqrt{e^v}} + \frac{D}{\sqrt{e^v}} \bigg) \label{eq:normfunc_ratio} \\
& \le 3.284 \bigg( 3 + \frac{D}{\sqrt{e^v}} \bigg) \label{eq:normfunc_2} \\
& \le 3.284 \bigg( 3 + \frac{D}{\sqrt{S}} \bigg) \label{eq:normfunc_3} \\
\bigg| \frac{\partial}{\partial v} \log \Phi\bigg(\frac{D-e^{u}}{\sqrt{e^{v}}}\bigg) \bigg| &= \bigg| e^v \cdot \frac{D-e^u}{2\sqrt{e^v}^3} \cdot \frac{\Phi'\bigg(\frac{D-e^{u}}{\sqrt{e^{v}}}\bigg)}{\Phi\bigg(\frac{D-e^{u}}{\sqrt{e^{v}}}\bigg)} \bigg| \nonumber \\
& \le \frac{3}{2} \times 3.284 \label{eq:normfunc_4}
\end{align}
where Inequality \eqref{eq:normfunc_ratio} comes from the fact that $\Phi'(x)/\Phi(x) \le 3.284$ for $x \in [-3, 0)$, Inequality \eqref{eq:normfunc_2} comes from the fact that $e^u-D<3\sqrt{e^v}$.  Since $\log \Phi(\frac{D-e^{u}}{\sqrt{e^{v}}})$ is differentiable in $\Cregion'$, we obtain \eqref{eq:norm_lipschitz}.
\end{proof}
\begin{proof}[Proof of Theorem~\ref{thm:deadline}]
Let $l_1(p)$ denote the mean traveling time for path $p$, and $l_2(p)$ denote the variance for path $p$.  Further denote $\lvec(p)=(l_1(p), l_2(p))$.
By Lemma~\ref{lemma:multi_hops_general1}, we can see that for the optimal $s-t$ path $p^*$, there exists an $s-t$ path $p'' \in \Pi_t^\gamma$ such that
\begin{equation*}
\bigg| \log \frac{l_k(p^*)}{l_k(p'')} \bigg| \le \gamma \log (1+\epsilon)
\end{equation*}
Denote the region $\Cregion = \{(x,y) | x>D, y>S, x-D \le 3\sqrt{y}\}$.  We can see that $(x,y) \in \Cregion$ if and only if $\Phi(\frac{D-x}{\sqrt{y}}) \in [\Phi(-3),0.5)$ and $y>S$.  The optimal path must satisfy $\lvec(p^*) \in \Cregion$ since $f(p^*) \ge f(p')>\Phi(-3)$.

First consider the case that $\lvec(p'')$ is in the region $\Cregion$.  Then we can adapt the same technique as in the proof of Theorem~\ref{thm:main} and the smoothness condition we showed in Lemma~\ref{lemma:norm_lipschitz} to claim that the path $p'$ returned by the algorithm satisfies
\begin{align*}
f(p^*) &\le (1+\epsilon)^{2\times 3.284\times (3+\frac{D}{\sqrt{S}}) \gamma} f(p'') \\
&\le (1+\epsilon)^{2\times 3.284\times (3+\frac{D}{\sqrt{S}}) \gamma} f(p')
\end{align*}

Now consider the case that $\lvec(p'')$ is not in the region $\Cregion$, then there must exist a point $(x_0, y_0) \in \Cregion$ on the line $L=\{x-D = 3\sqrt{y}\}$ such that
\begin{align*}
\bigg| \log \frac{x_0}{l_1(p^*)} \bigg| \le \gamma \log (1+\epsilon) \\
\bigg| \log \frac{y_0}{l_2(p^*)} \bigg| \le \gamma \log (1+\epsilon)
\end{align*}
It then follows from Lemma~\ref{lemma:norm_lipschitz} that 
\begin{equation*}
\bigg| \log \frac{\Phi(\frac{D-x_0}{\sqrt{y_0}})}{f(p^*)} \bigg| \le 2\times 3.284 \times \bigg(3+\frac{D}{\sqrt{S}}\bigg) \gamma \log (1+\epsilon)
\end{equation*}
Since the algorithm returns some other path $p'$ such that $f(p')>\Phi(-3)=\Phi(\frac{D-x_0}{\sqrt{y_0}})$, we must have
\begin{align*}
f(p^*) &\le (1+\epsilon)^{6.568(3+\frac{D}{\sqrt{S}})\gamma} \Phi(-3) \\
&< (1+\epsilon)^{6.568(3+\frac{D}{\sqrt{S}})\gamma} f(p')
\end{align*}

However, the approximation ratio is upper-bounded by $0.5/\Phi(-3)\approx 384.62$, so this shows that $p'$ is an $\min\{384.62, (1+\epsilon)^{6.568(3+\frac{D}{\sqrt{S}})\gamma}\}$-approximation to $p^*$.  We can apply the same technique for the case that $f(p') > \Phi(-2)$ or any other values.
\end{proof}

\end{document}